\documentclass{amsart}
\usepackage{amsfonts,amssymb,amsmath,amsthm}
\usepackage{url}
\usepackage{enumerate}
\usepackage{graphicx}
\usepackage{epstopdf}
\usepackage{float}
\usepackage{hyperref}
\usepackage{wrapfig}
\urlstyle{sf}
\newtheorem{thrm}{Theorem}[section]

\theoremstyle{definition}

\numberwithin{equation}{section}
\author{Dejan Kovacevic}
\address{
21 Neilson Drive
Toronto, M9C 1V3 Canada}
\email{kodza@yahoo.com}
\keywords{Navier-Stokes,compressible,fluid,divergence,divergence-free,turbulence,viscosity,incompressible}
\subjclass{Primary 76A02, Secondary 76D05}
\begin{document}
\title{Analysis of fluid velocity vector field divergence $\nabla\cdot\vec{u}$ in function of variable fluid density $\rho(\vec{x},t)\neq const$ and conditions for vanishing viscosity of compressible Navier-Stokes equations}
\maketitle
\begin{abstract}
In this paper, we perform analysis of the fluid velocity vector field divergence $\nabla \cdot \vec{u}$ derived from the continuity equation, and we explore its application in the Navier-Stokes equations for compressible fluids $\rho (\vec{x},t)\ne const$, occupying all of $\vec{x}\in R^{3} $ space at any $t\geq 0$. The resulting velocity vector field divergence $\nabla \cdot \vec{u}=-\frac{1}{\rho } (\frac{\partial \rho }{\partial t} +\vec{u}\cdot \nabla \rho )$ is a direct consequence of the fluid density rate of change over time $\frac{\partial \rho }{\partial t}$ and over space $\nabla \rho $, in addition to the fluid velocity vector field $\vec{u}(\vec{x},t)$ and the fluid density $\rho (\vec{x},t)$ itself. We derive the conditions for the divergence-free fluid velocity vector field $\nabla \cdot \vec{u}=0$ in scenarios when the fluid density is not constant $\rho(\vec{x},t)\neq const$ over space nor time, and we analyze scenarios of the non-zero divergence $\nabla \cdot \vec{u}\ne 0$. We apply the statement for divergence in the Navier-Stokes equation for compressible fluids, and we deduct the condition for vanishing (zero) viscosity term of the compressible Navier-Stokes equation: $\nabla (\frac{1}{\rho } (\frac{\partial \rho }{\partial t} +\vec{u}\cdot \nabla \rho ))=3\Delta \vec{u}$. In addition to that, we derive even more elementary condition for vanishing viscosity, stating that vanishing viscosity is triggered once scalar function $d(\vec{x},t)=-\frac{1}{\rho } (\frac{\partial \rho }{\partial t} +\vec{u}\cdot \nabla \rho )$ is \textbf{harmonic function}. Once that condition is satisfied, the viscosity related term of the Navier-Stokes equations for compressible fluids equals to zero, which is known as related to turbulent fluid flows.
\end{abstract}
\maketitle
\section{Introduction}
\noindent In this paper, we perform analysis of the fluid velocity vector field divergence $\nabla \cdot \vec{u}$ derived from the continuity equation, and we explore its application in the Navier-Stokes equations for compressible fluids $\rho (\vec{x},t)\ne const$, occupying all of $\vec{x}\in R^{3} $ space at any $t\geq 0$. The divergence-free velocity vector field  $\nabla \cdot \vec{u}=0$ is used in the context of incompressible flows as well as incompressible fluids. In addition to that, the constant fluid density is often used without explicit differentiation between cases of constant fluid density over elapsed time, and constant fluid density over space. Also, the generalized constant fluid density $\rho (\vec{x},t)=const$  over both space and time is often equivalently and interchangeably used with the fluid velocity divergence-free statement $\nabla \cdot \vec{u}=0$. Inconsistent and interchangeable use of terminology with fundamentally different meaning might be a source of misinterpretation and confusion. In this paper, we explore the meaning of the fluid velocity vector field divergence, expressed in function of fluid velocity and density $\nabla \cdot \vec{u}=f(\vec{u},\rho)$.

\noindent We begin analysis with the continuity equation, one of the most fundamental equations in fluid dynamics due to its direct relationship with the law of conservation of mass. From that, we derive the fluid velocity vector field divergence $\nabla \cdot \vec{u}=-\frac{1}{\rho } (\frac{\partial \rho }{\partial t} +\vec{u}\cdot \nabla \rho )$. Based on that statement, we continue analysis of scenarios in which the fluid velocity vector field is divergence-free  $\nabla \cdot \vec{u}=0$ in function of the following two key terms of the velocity divergence statement: $\frac{\partial \rho }{\partial t} $  and $\vec{u}\cdot \nabla \rho $. We also analyze scenarios when divergence  $\nabla \cdot \vec{u}$ is either positive or negative. We share examples demonstrating scenarios when fluid velocity vector field is divergence-free $\nabla \cdot \vec{u}=0$ under conditions when fluid density is not constant and can change over space and/or time. We apply velocity vector field divergence statement  $\nabla \cdot \vec{u}=-\frac{1}{\rho } (\frac{\partial \rho }{\partial t} +\vec{u}\cdot \nabla \rho )$ to the Navier-Stokes equations for compressible fluids resulting with:
\[\frac{\partial \vec{u}}{\partial t} +(\vec{u}\cdot \nabla )\vec{u}=-\frac{1}{\rho } \nabla \bar{p}+\nu (\Delta \vec{u}-\frac{1}{3} \nabla (\frac{1}{\rho } (\frac{\partial \rho }{\partial t} +\vec{u}\cdot \nabla \rho )))+\vec{f}\]
Based on that statement, we derive condition under which the viscosity related term of the Navier-Stokes equations for compressible fluids is equal to zero:
\[\nabla (\frac{1}{\rho } (\frac{\partial \rho }{\partial t} +\vec{u}\cdot \nabla \rho ))=3\Delta \vec{u}\]
Finally, we further deduct even more elementary condition for vanishing (zero) viscosity related term of compressible Navier-Stokes equations.
Derived condition states that vanishing viscosity term of compressible Navier-Stokes equations is triggered once scalar function, representing velocity vector field divergence
\[d(\vec{x},t)=\nabla \cdot \vec{u}=-\frac{1}{\rho } (\frac{\partial \rho }{\partial t} +\vec{u}\cdot \nabla \rho )\]
is \textbf{harmonic function}. Once that condition is satisfied, the viscosity related term of the Navier-Stokes equations for compressible fluids equals to zero, which is known as a condition of turbulent fluid flow.
\begin{thrm}\label{T1}
Let the compressible fluid of density $\rho (\vec{x},t)\in R$ occupy all of $R^{3} $  space, where $\rho (\vec{x},t)$ is smooth and continuously differentiable scalar function, and let $\vec{u}(\vec{x},t)\in R^{3}$ represent the fluid velocity vector field, which is smooth and continuously differentiable for all positions in space $\vec{x}\in R^{3} $  at any time $t\ge 0$.

\noindent Then, the divergence of fluid velocity vector field  $\nabla \cdot \vec{u}$ can be expressed in function of fluid density $\rho (\vec{x},t)$  and the velocity vector field $\vec{u}(\vec{x},t)$  as
\[\nabla \cdot \vec{u}=-\frac{1}{\rho } (\frac{\partial \rho }{\partial t} +\vec{u}\cdot \nabla \rho )=-\frac{1}{\rho } \frac{d\rho }{dt}\]
for all positions in space $\vec{x}\in R^{3} $ at any time $t\ge 0$.
\end{thrm}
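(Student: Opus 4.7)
The plan is to start from the continuity equation expressing conservation of mass, which for a compressible fluid reads
\[\frac{\partial \rho}{\partial t} + \nabla\cdot(\rho\vec{u}) = 0.\]
This equation is the natural starting point because the statement of the theorem is essentially a rearrangement of it, and the paper's introduction explicitly signals that the continuity equation is the point of departure. I would justify its applicability here by invoking the hypotheses of the theorem: $\rho$ and $\vec{u}$ are smooth on $\mathbb{R}^3\times[0,\infty)$, so all partial derivatives and the divergence $\nabla\cdot(\rho\vec{u})$ are well defined pointwise.

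The central computational step is to expand $\nabla\cdot(\rho\vec{u})$ via the product rule for divergence of a scalar times a vector field, giving
\[\nabla\cdot(\rho\vec{u}) = \rho\,(\nabla\cdot\vec{u}) + \vec{u}\cdot\nabla\rho.\]
Substituting this into the continuity equation yields
\[\frac{\partial \rho}{\partial t} + \rho\,(\nabla\cdot\vec{u}) + \vec{u}\cdot\nabla\rho = 0,\]
and then I would isolate $\nabla\cdot\vec{u}$ by dividing through by $\rho$ to obtain
\[\nabla\cdot\vec{u} = -\frac{1}{\rho}\Bigl(\frac{\partial \rho}{\partial t} + \vec{u}\cdot\nabla\rho\Bigr).\]
To recover the second form stated in the theorem, I would recall the definition of the material (convective) derivative $\tfrac{d\rho}{dt} := \tfrac{\partial \rho}{\partial t} + \vec{u}\cdot\nabla\rho$, which immediately gives the compact equality $\nabla\cdot\vec{u} = -\tfrac{1}{\rho}\tfrac{d\rho}{dt}$.

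The only genuine subtlety, and the step I would flag explicitly, is the division by $\rho$: the manipulation presumes $\rho(\vec{x},t)\neq 0$ pointwise on the domain. Physically this is automatic for a fluid that actually occupies the region (mass density is strictly positive), but the theorem as stated does not exclude vanishing density, so I would add a line noting that $\rho>0$ is assumed implicitly as a nondegeneracy condition. Beyond that, there is no real obstacle: the smoothness hypothesis guarantees that the product rule is valid classically and the identity holds pointwise for every $(\vec{x},t)\in\mathbb{R}^3\times[0,\infty)$, completing the proof.
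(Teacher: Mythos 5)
Your proposal is correct and follows essentially the same route as the paper: start from the continuity equation, expand $\nabla\cdot(\rho\vec{u})$ by the product rule, isolate $\nabla\cdot\vec{u}$, and identify the material derivative. Your explicit remark that dividing by $\rho$ requires $\rho\neq 0$ is a small but worthwhile addition the paper omits.
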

\begin{proof}
The continuity equation
\begin{align}
    \frac{\partial \rho }{\partial t} +\nabla \cdot (\rho \vec{u})=0
\label{E1.1}\end{align}
for any $\vec{x}\in R^{3} $ and $t\ge 0$, expresses the law of conservation of mass. From statement (\ref{E1.1}) we further derive:
\begin{align}
    \frac{\partial \rho }{\partial t} +\vec{u}\cdot \nabla \rho +\rho \nabla \cdot \vec{u}=0
\label{E1.2}\end{align}
once terms of the statement (\ref{E1.2}) are re-arranged:
\begin{align}
    \rho \nabla \cdot \vec{u}=-(\frac{\partial \rho }{\partial t} +\vec{u}\cdot \nabla \rho )
\label{E1.3}\end{align}
we conclude
\begin{align}
    \nabla \cdot \vec{u}=-\frac{1}{\rho } (\frac{\partial \rho }{\partial t} +\vec{u}\cdot \nabla \rho )
\label{E1.4}\end{align}
as
\begin{align}
    \vec{u}\cdot \nabla \rho =(u_{x} ,u_{y} ,u_{z} )\cdot (\frac{\partial \rho }{\partial x} ,\frac{\partial \rho }{\partial y} ,\frac{\partial \rho }{\partial z} )
\label{E1.5}\end{align}
\begin{align}
    \vec{u}\cdot \nabla \rho =u_{x} \frac{\partial \rho }{\partial x} +u_{y} \frac{\partial \rho }{\partial y} +u_{z} \frac{\partial \rho }{\partial z}
\label{E1.6}\end{align}
applying (\ref{E1.6}) in (\ref{E1.4}):
\begin{align}
    \nabla \cdot \vec{u}=-\frac{1}{\rho } (\frac{\partial \rho }{\partial t} +u_{x} \frac{\partial \rho }{\partial x} +u_{y} \frac{\partial \rho }{\partial y} +u_{z} \frac{\partial \rho }{\partial z} )
\label{E1.7}\end{align}
the total derivative of density over time is represented as:
\begin{align}
    \frac{d\rho }{dt} =\frac{\partial \rho }{\partial t} +u_{x} \frac{\partial \rho }{\partial x} +u_{y} \frac{\partial \rho }{\partial y} +u_{z} \frac{\partial \rho }{\partial z}
\label{E1.8}\end{align}
Applying (\ref{E1.8}) in (\ref{E1.7}):
\begin{align}
    \nabla \cdot \vec{u}=-\frac{1}{\rho } \frac{d\rho }{dt}
\label{E1.9}\end{align}
combining (\ref{E1.4}) and (\ref{E1.7}) results with
\begin{align}
    \nabla \cdot \vec{u}=-\frac{1}{\rho } (\frac{\partial \rho }{\partial t} +\vec{u}\cdot \nabla \rho )=-\frac{1}{\rho } \frac{d\rho }{dt}
\label{E1.10}\end{align}
for any $\vec{x}\in R^{3} $ and $t\ge 0$, which proves this theorem.
\end{proof}
\noindent Each of two terms $\frac{\partial \rho }{\partial t} $ and  $\vec{u}\cdot \nabla \rho $ of the statement for fluid velocity vector field divergence $\nabla \cdot \vec{u}=-\frac{1}{\rho } (\frac{\partial \rho }{\partial t} +\vec{u}\cdot \nabla \rho )$  might be zero, positive, or negative for any selected position in space $x\in R^{3} $  at any time $t\ge 0$. Therefore, the sum of the terms $\frac{\partial \rho }{\partial t} $  and  $\vec{u}\cdot \nabla \rho $ within the brackets, could be negative, positive, or zero. In case that both terms, $\frac{\partial \rho }{\partial t} $ and $\vec{u}\cdot \nabla \rho $, are of equal value and opposite sign, their sum is zero, resulting in zero divergence $\nabla \cdot \vec{u}=0$. This means that fluid density  $\rho $ can change over time as well as over space, while the resulting velocity vector field can be divergence-free $\nabla \cdot \vec{u}=0$  for as long as the sum of the terms, $\frac{\partial \rho }{\partial t} $ and  $\vec{u}\cdot \nabla \rho $, results in zero.

\noindent Notably, the divergence of fluid velocity could be zero  $\nabla \cdot \vec{u}=0$ also in the case when the gradient of density is zero $\nabla \rho =0$  and the density change over time is also zero $\frac{\partial \rho }{\partial t} =0$  for any position $x\in R^{3} $  and $t\ge 0$  .  However, such a scenario represents a special case only in comparison with all possible scenarios.

\noindent Let us explore, in detail, the following scenarios, and their impact on fluid velocity vector field divergence:

\noindent 1)$\frac{\partial \rho }{\partial t} >0$  and  $\vec{u}\cdot \nabla \rho >0$

\noindent then
\[\frac{\partial \rho }{\partial t} +\vec{u}\cdot \nabla \rho >0\]
therefore
\[\nabla \cdot \vec{u}=-\frac{1}{\rho } (\frac{\partial \rho }{\partial t} +\vec{u}\cdot \nabla \rho )<0\]
In this scenario, divergence of fluid velocity vector field is negative $\nabla \cdot \vec{u}<0$.

\noindent 2) $\frac{\partial \rho }{\partial t} >0$ and  $\vec{u}\cdot \nabla \rho <0$and $\left|\frac{\partial \rho }{\partial t} \right|>\left|\vec{u}\cdot \nabla \rho \right|$

\noindent then
\[\frac{\partial \rho }{\partial t} +\vec{u}\cdot \nabla \rho >0\]
therefore
\[\nabla \cdot \vec{u}=-\frac{1}{\rho } (\frac{\partial \rho }{\partial t} +\vec{u}\cdot \nabla \rho )<0\]
In this scenario, divergence of fluid velocity vector field is negative $\nabla \cdot \vec{u}<0$.

\noindent 3) $\frac{\partial \rho }{\partial t} >0$ and  $\vec{u}\cdot \nabla \rho <0$and $\left|\frac{\partial \rho }{\partial t} \right|<\left|\vec{u}\cdot \nabla \rho \right|$

\noindent then
\[\frac{\partial \rho }{\partial t} +\vec{u}\cdot \nabla \rho <0\]
therefore
\[\nabla \cdot \vec{u}=-\frac{1}{\rho } (\frac{\partial \rho }{\partial t} +\vec{u}\cdot \nabla \rho )>0\]
In this scenario, divergence of fluid velocity vector field is positive $\nabla \cdot \vec{u}>0$.

\noindent 4) $\frac{\partial \rho }{\partial t} <0$ and  $\vec{u}\cdot \nabla \rho >0$and $\left|\frac{\partial \rho }{\partial t} \right|>\left|\vec{u}\cdot \nabla \rho \right|$

\noindent then
\[\frac{\partial \rho }{\partial t} +\vec{u}\cdot \nabla \rho <0\]
therefore
\[\nabla \cdot \vec{u}=-\frac{1}{\rho } (\frac{\partial \rho }{\partial t} +\vec{u}\cdot \nabla \rho )>0\]
In this scenario, divergence of fluid velocity vector field is positive $\nabla \cdot \vec{u}>0$.

\noindent 5) $\frac{\partial \rho }{\partial t} <0$ and  $\vec{u}\cdot \nabla \rho >0$and $\left|\frac{\partial \rho }{\partial t} \right|<\left|\vec{u}\cdot \nabla \rho \right|$

\noindent then
\[\frac{\partial \rho }{\partial t} +\vec{u}\cdot \nabla \rho >0\]
therefore
\[\nabla \cdot \vec{u}=-\frac{1}{\rho } (\frac{\partial \rho }{\partial t} +\vec{u}\cdot \nabla \rho )<0\]
In this scenario, divergence of fluid velocity vector field is negative $\nabla \cdot \vec{u}<0$.

\noindent 6) $\frac{\partial \rho }{\partial t} <0$ and  $\vec{u}\cdot \nabla \rho <0$

\noindent then
\[\frac{\partial \rho }{\partial t} +\vec{u}\cdot \nabla \rho <0\]
therefore
\[\nabla \cdot \vec{u}=-\frac{1}{\rho } (\frac{\partial \rho }{\partial t} +\vec{u}\cdot \nabla \rho )>0\]
In this scenario, divergence of fluid velocity vector field is positive $\nabla \cdot \vec{u}>0$.

\begin{thrm}\label{T2}
Let the compressible fluid of density $\rho (\vec{x},t)\in R$ occupy all of $R^{3} $  space, where $\rho (\vec{x},t)$ is smooth and continuously differentiable scalar function, and let $\vec{u}(\vec{x},t)\in R^{3}$ represent the fluid velocity vector field, which is smooth and continuously differentiable for all positions in space $\vec{x}\in R^{3} $  at any time $t\ge 0$.

\noindent Then, the fluid velocity vector field $\vec{u}(\vec{x},t)$  is divergence-free $\nabla \cdot \vec{u}=0$  only if following condition is satisfied:
\[\frac{\partial \rho }{\partial t} =-\vec{u}\cdot \nabla \rho \]
for all positions $\vec{x}\in R^{3} $ at any time $t\ge 0$.
\end{thrm}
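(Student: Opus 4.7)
The plan is to invoke Theorem~\ref{T1} directly and solve the resulting scalar equation for the condition equivalent to $\nabla\cdot\vec{u}=0$. By Theorem~\ref{T1}, the identity
\[
\nabla\cdot\vec{u}=-\frac{1}{\rho}\left(\frac{\partial\rho}{\partial t}+\vec{u}\cdot\nabla\rho\right)
\]
holds pointwise for all $\vec{x}\in\mathbb{R}^{3}$ and $t\geq 0$. So the entire claim reduces to a one-line algebraic manipulation of this already-established formula; there is no new analytical content beyond that.

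First I would assume $\nabla\cdot\vec{u}(\vec{x},t)=0$ at a given point $(\vec{x},t)$ and substitute into the Theorem~\ref{T1} identity, obtaining $0=-\frac{1}{\rho}\bigl(\frac{\partial\rho}{\partial t}+\vec{u}\cdot\nabla\rho\bigr)$. Then I would multiply both sides by $-\rho$ to clear the prefactor, arriving at $\frac{\partial\rho}{\partial t}+\vec{u}\cdot\nabla\rho=0$, and finally rearrange to the stated form $\frac{\partial\rho}{\partial t}=-\vec{u}\cdot\nabla\rho$. Since every step is reversible (again, modulo the division by $\rho$), the same chain also gives the converse direction, so the condition is in fact an equivalence and not merely a necessary condition as the word \emph{only if} in the statement might suggest.

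The only mildly nontrivial point, and what I would call out explicitly, is the step that multiplies or divides by $\rho$. This requires $\rho(\vec{x},t)\neq 0$, which is physically automatic for a real fluid (mass density is strictly positive) but is not stated as an explicit hypothesis in the theorem, only as $\rho\in\mathbb{R}$. I would therefore preface the computation with a brief remark that $\rho>0$ is implicit in the physical setting, or alternatively note that at any point where $\rho=0$ the original identity of Theorem~\ref{T1} already forces $\frac{\partial\rho}{\partial t}+\vec{u}\cdot\nabla\rho=0$ from the continuity equation~(\ref{E1.1}) itself, so the equivalence persists. Beyond this caveat, there is no real obstacle; the theorem is essentially a corollary of Theorem~\ref{T1}.
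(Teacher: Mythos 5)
Your proposal follows exactly the paper's own proof: substitute $\nabla\cdot\vec{u}=0$ into the identity of Theorem~\ref{T1}, clear the factor $-\frac{1}{\rho}$, and rearrange to $\frac{\partial\rho}{\partial t}=-\vec{u}\cdot\nabla\rho$. Your added caveat about $\rho\neq 0$ is a sensible refinement the paper omits, but the argument is the same.
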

\begin{proof}
\noindent As per Theorem \ref{T1}:
\begin{align}
    \nabla \cdot \vec{u}=-\frac{1}{\rho } (\frac{\partial \rho }{\partial t} +\vec{u}\cdot \nabla \rho )
\label{E2.1}\end{align}
as per statement of this theorem:
\begin{align}
    \nabla \cdot \vec{u}=0
\label{E2.2}\end{align}
applying (\ref{E2.2}) on (\ref{E2.1}):
\begin{align}
    -\frac{1}{\rho } (\frac{\partial \rho }{\partial t} +\vec{u}\cdot \nabla \rho )=0
\label{E2.3}\end{align}
\begin{align}
    \frac{\partial \rho }{\partial t} +\vec{u}\cdot \nabla \rho =0
\label{E2.4}\end{align}
rearranging terms of (\ref{E2.4}) we conclude
\begin{align}
    \frac{\partial \rho }{\partial t} =-\vec{u}\cdot \nabla \rho
\label{E2.5}\end{align}
for all positions $\vec{x}\in R^{3} $ at any time $t\ge 0$, which proves this theorem.
\end{proof}

\noindent \textbf{Example 1.2.1}

\noindent In this example we will select fluid density which is variable both in space and over time. We will demonstrate that velocity divergence is zero although fluid is compressible and density is not constant over space nor time.

\noindent Let's define fluid density as:
\begin{align}
    \rho =2+\cos (x+y+z+\cos (t))
\label{E2.9}\end{align}
for all positions $\vec{x}\in R^{3} $ at any time $t\ge 0$, and fluid velocity as:
\begin{align}
    \vec{u}=\frac{1}{3} (\sin (t),\sin (t),\sin (t))
\label{E2.10}\end{align}
for all positions $\vec{x}\in R^{3} $ at any time $t\ge 0$.
As per Theorem \ref{T1}:
\begin{align}
    \nabla \cdot \vec{u}=-\frac{1}{\rho } (\frac{\partial \rho }{\partial t} +\vec{u}\cdot \nabla \rho )
\label{E2.11}\end{align}
Let's apply $\rho$ as per (\ref{E2.9}) in $\frac{\partial \rho }{\partial t}$, representing the first term in the brackets of statement (\ref{E2.11}):
\begin{align}
    \frac{\partial \rho }{\partial t} =\frac{\partial (2+\cos (x+y+z+\cos (t)))}{\partial t}
\label{E2.12}\end{align}
\begin{align}
    \frac{\partial \rho }{\partial t} =\sin (t)\sin (x+y+z+\cos (t))
\label{E2.13}\end{align}
applying gradient $\nabla$ operator on fluid density $\rho $ as per (\ref{E2.9}):
\begin{align}
    \nabla \rho =\nabla (2+\cos (x+y+z+\cos (t)))
\label{E2.14}\end{align}
\begin{align}
    \nabla \rho =(-\sin (x+y+z+\cos (t)),-\sin (x+y+z+\cos (t)),-\sin (x+y+z+\cos (t)))
\label{E2.15}\end{align}
applying on (\ref{E2.15}) dot product with fluid velocity vector field $\vec{u}$ as per (\ref{E2.10}):
\begin{align}
    \vec{u}\cdot \nabla \rho =\frac{1}{3} (\sin (t),\sin (t),\sin (t))\cdot
\label{E2.16}\end{align}
\[(-\sin (x+y+z+\cos (t)),-\sin (x+y+z+\cos (t)),-\sin (x+y+z+\cos (t)))\]
\begin{align}
    \vec{u}\cdot \nabla \rho =-\sin (t)\sin (x+y+z+\cos (t))
\label{E2.16.1}\end{align}
applying (\ref{E2.13}) and (\ref{E2.16.1})in (\ref{E2.11}):
\begin{align}
    \nabla \cdot \vec{u}=-\frac{1}{\rho } (\sin (t)\sin (x+y+z+\cos (t))-\sin (t)\sin (x+y+z+\cos (t)))
\label{E2.17}\end{align}
\begin{align}
    \nabla \cdot \vec{u}=0
\label{E2.19}\end{align}
which demonstrates that vector field is divergence-free, in scenario when fluid density is variable both over space as well as over time.

\begin{thrm}\label{T3}
Let the compressible fluid of density $\rho (\vec{x},t)\in R$ occupy all of $R^{3} $  space, where $\rho (\vec{x},t)$ is smooth and continuously differentiable scalar function, and let $\vec{u}(\vec{x},t)\in R^{3}$ represent the fluid velocity vector field, which is smooth and continuously differentiable for all positions in space $\vec{x}\in R^{3} $  at any time $t\ge 0$. Also, let $\alpha (\vec{x},t)$ represents angle between two vectors belonging to the vector fields $\vec{u}(\vec{x},t)\in R^{3}$ and $\nabla \rho (\vec{x},t)\in R^{3}$ respectively, sharing the same position in space $\vec{x}\in R^{3}$ at any time $t\ge 0$.

\noindent Then, in case that following conditions are satisfied:
\[\frac{\partial \rho }{\partial t} =0 ;  \nabla \rho \ne 0; \alpha(\vec{x},t) =\pm \frac{\pi }{2} \]

\noindent fluid velocity vector field $\vec{u}$ must be divergence-free $\nabla \cdot \vec{u}=0$ for all positions $\vec{x}\in R^{3}$  at any time $t\ge 0$.
\end{thrm}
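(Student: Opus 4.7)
The plan is to invoke Theorem \ref{T1} and then show that under the three listed hypotheses both terms inside the bracket on the right-hand side vanish, forcing the divergence to be zero.

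First, I would restate the conclusion of Theorem \ref{T1}, namely $\nabla \cdot \vec{u} = -\frac{1}{\rho}(\frac{\partial \rho}{\partial t} + \vec{u}\cdot \nabla \rho)$, which is valid at every $\vec{x}\in R^3$ and $t\ge 0$ under the smoothness assumptions already imposed on $\rho$ and $\vec{u}$. By the first hypothesis $\frac{\partial \rho}{\partial t}=0$, the time-derivative term drops out immediately, so the work reduces to showing that the convective term $\vec{u}\cdot \nabla \rho$ also vanishes everywhere.

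Next, I would rewrite the dot product in its geometric form, $\vec{u}\cdot \nabla \rho = |\vec{u}|\,|\nabla \rho|\,\cos\alpha(\vec{x},t)$, where $\alpha$ is precisely the angle defined in the hypothesis between the two vector fields at the common point $(\vec{x},t)$. The hypothesis $\alpha(\vec{x},t) = \pm\pi/2$ gives $\cos\alpha = 0$ pointwise, so $\vec{u}\cdot \nabla \rho = 0$ regardless of the magnitudes $|\vec{u}|$ and $|\nabla \rho|$. The assumption $\nabla \rho \neq 0$ is not actually needed to make the product vanish, but it guarantees that $\alpha$ is well-defined and that the perpendicularity statement is non-trivial (otherwise the angle would be indeterminate).

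Combining the two observations, the bracket in Theorem \ref{T1} is identically zero, and since $\rho$ is assumed smooth and (implicitly) non-vanishing so that $1/\rho$ makes sense, we conclude $\nabla \cdot \vec{u} = 0$ for every $\vec{x}\in R^3$ and $t\ge 0$. There is no real obstacle here; the only subtlety worth flagging explicitly in the write-up is that the orthogonality hypothesis must hold pointwise for all $(\vec{x},t)$ in order to conclude the divergence-free property globally, and that the role of $\nabla \rho \neq 0$ is just to ensure that the angle $\alpha$ is meaningfully defined. One could also illustrate the theorem with a short example (for instance $\rho$ depending only on $z$ and $\vec{u}$ lying in the $xy$-plane) to make the geometric picture concrete, paralleling the style of Example 1.2.1.
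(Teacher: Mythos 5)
Your proposal is correct and takes essentially the same route as the paper's own proof: both invoke Theorem \ref{T1}, rewrite the convective term as $\vec{u}\cdot \nabla \rho =\left|\vec{u}\right|\left|\nabla \rho \right|\cos \alpha$, and use $\cos (\pm \frac{\pi }{2} )=0$ together with $\frac{\partial \rho }{\partial t} =0$ to conclude $\nabla \cdot \vec{u}=0$. Your side remarks---that $\nabla \rho \ne 0$ serves only to make the angle well-defined and that $\rho$ must implicitly be non-vanishing---are correct clarifications that the paper's proof does not state explicitly.
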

\begin{proof}
\noindent As per Theorem \ref{T1}:
\begin{align}
    \nabla \cdot \vec{u}=-\frac{1}{\rho } (\frac{\partial \rho }{\partial t} +\vec{u}\cdot \nabla \rho )
\label{E3.01}\end{align}
for any $\vec{x}\in R^{3}$  at $t\ge 0$.
Let's observe term $\vec{u}\cdot \nabla \rho$ in statement (\ref{E3.01}), which could be represented in following way:
\begin{align}
    \vec{u}\cdot \nabla \rho =\left|\vec{u}\right|\left|\nabla \rho \right|\cos \alpha(\vec{x},t)
\label{E3.1}\end{align}
for any $\vec{x}\in R^{3}$  at $t\ge 0$.
As per condition defined by this theorem:
\begin{align}
    \cos \alpha(\vec{x},t)=\cos \left(\pm \frac{\pi }{2} \right)=0
\label{E3.2}\end{align}
for any $\vec{x}\in R^{3}$  at $t\ge 0$.
Applying (\ref{E3.2}) in (\ref{E3.1}):
\begin{align}
    \vec{u}\cdot \nabla \rho =0
\label{E3.3}\end{align}
for any $\vec{x}\in R^{3}$  at $t\ge 0$.
As per this theorem statement:
\begin{align}
    \frac{\partial \rho }{\partial t} =0
\label{E3.4}\end{align}
for any $\vec{x}\in R^{3}$  at $t\ge 0$.
Applying (\ref{E3.3}) and (\ref{E3.4}) in (\ref{E3.01}):
\begin{align}
    \nabla \cdot \vec{u}=0
\label{E3.6}\end{align}
for all positions $\vec{x}\in R^{3}$ at any time $t\ge 0$, which proves this theorem.
\end{proof}

\noindent \textbf{Example 1.3.1}
\noindent This example will demonstrate one scenario, in which fluid velocity vector field $\vec{u}$ is divergence-free $\nabla\cdot\vec{u}=0$  when fluid velocity vector field divergence is expressed, as per Theorem \ref{T1}, in function of fluid density:
 \begin{align}
    \nabla \cdot \vec{u}=-\frac{1}{\rho } (\frac{\partial \rho }{\partial t} +\vec{u}\cdot \nabla \rho )
\label{E3.6.1}\end{align}
Fluid density gradient, in this example, is to be selected such that it is not constant $\nabla\rho \neq const$. What that means is that fluid density changes with change of position in space, for any selected time $t\geq 0$. On the other hand, fluid density $\rho$  to be selected, is constant over any elapsed time starting from any time $t\geq 0$ and for any position in space $\vec{x}\in R^{3}$. As per Theorem \ref{T3}, for velocity vector field $\vec{u}$ to be divergence-free $\nabla\cdot\vec{u}=0$, vectors belonging to the gradient of density vector field $\nabla\rho$ must be orthogonal to vectors belonging to the fluid velocity vector field $\vec{u}$, for each two vectors sharing the same position in space $\vec{x}\in R^{3}$ and at the same time $t\ge 0$.

\noindent Let's define fluid density as:
\begin{align}
    \rho(\vec{x}) =\frac{1}{1+\frac{1}{x^{2} +y^{2} +z^{2} } }
\label{E3.7}\end{align}
at any position $\vec{x}\in R^{3}$ at any time $t\ge 0$.
As density is not in function of time, partial derivative by time must be zero:
\begin{align}
    \frac{\partial \rho(\vec{x})}{\partial t} =0
\label{E3.8}\end{align}
applying gradient on fluid density $\rho(\vec{x})$:
\begin{align}
    \nabla \rho(\vec{x)} =\frac{2}{\left(x^{2} +y^{2} +z^{2} \right)^{2} \left(1+\frac{1}{x^{2} +y^{2} +z^{2} } \right)^{2} } (x,y,z)
\label{E3.9}\end{align}
Let's define fluid velocity vector field $\vec{u}(\vec{x)}$ as:
\begin{align}
    \vec{u}=(-y,x,0)
\label{E3.10}\end{align}
at any position $\vec{x}\in R^{3}$ at any time $t\ge 0$.
Applying dot product between vector fields $\nabla \rho$ and $\vec{u}$:
\begin{align}
    \vec{u}\cdot \nabla \rho =(-y,x,0)\cdot (x,y,z)\frac{1}{\left(x^{2} +y^{2} +z^{2} \right)^{2} \left(1+\frac{1}{x^{2} +y^{2} +z^{2} } \right)^{2} }
\label{E3.11}\end{align}
\begin{align}
    \vec{u}\cdot \nabla \rho =\frac{-xy+xy}{\left(x^{2} +y^{2} +z^{2} \right)^{2} \left(1+\frac{1}{x^{2} +y^{2} +z^{2} } \right)^{2} }
\label{E3.12}\end{align}
\begin{align}
    \vec{u}\cdot \nabla \rho =0
\label{E3.13}\end{align}
per Theorem \ref{T1}
\begin{align}
    \nabla \cdot \vec{u}=-\frac{1}{\rho } (\frac{\partial \rho }{\partial t} +\vec{u}\cdot \nabla \rho )
\label{E3.14}\end{align}
applying (\ref{E3.8}) and (\ref{E3.13}) in (\ref{E3.14}):
\begin{align}
    \nabla \cdot \vec{u}=-\frac{1}{\rho } (0+0)
\label{E3.15}\end{align}
\begin{align}
    \nabla \cdot \vec{u}=0
\label{E3.16}\end{align}
for all positions $\vec{x}\in R^{3}$  at any time $t\ge 0$.
In addition to that, in case that fluid velocity vector field divergence is derived based on direct application of divergence operator $\nabla \cdot$ over velocity vector field $\vec{u}$:
\begin{align}
    \nabla \cdot \vec{u}=\nabla \cdot (-y,x,0)=0
\label{E3.17}\end{align}
same result is obtained.

\noindent This example demonstrated that fluid velocity vector field is divergence-free  $\nabla \cdot \vec{u}=0$, for selected fluid density $\rho$ such that it is constant over elapsed time, for all positions in space $\vec{x}\in R^{3}$, and which is not constant across all positions in space $\vec{x}\in R^{3}$ for any $t\geq 0$.
\begin{thrm}\label{T4}
Let the compressible fluid of density $\rho (\vec{x},t)\in R$ occupy all of $R^{3} $  space, where $\rho (\vec{x},t)$ is smooth and continuously differentiable scalar function, and let $\vec{u}(\vec{x},t)\in R^{3}$ represent the fluid velocity vector field, which is smooth and continuously differentiable for all positions in space $\vec{x}\in R^{3} $  at any time $t\ge 0$.

\noindent Then Navier-Stokes equation for compressible fluid can be expressed as:
\[\frac{\partial \vec{u}}{\partial t} +(\vec{u}\cdot \nabla )\vec{u}=-\frac{1}{\rho } \nabla \bar{p}+\nu (\Delta \vec{u}-\frac{1}{3} \nabla (\frac{1}{\rho } (\frac{\partial \rho }{\partial t} +\vec{u}\cdot \nabla \rho )))+\vec{f}\]
for all positions $\vec{x}\in R^{3}$  at any time $t\ge 0$.
\end{thrm}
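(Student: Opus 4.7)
The plan is to start from a standard form of the compressible Navier--Stokes momentum equation in which the viscous term is written explicitly as
\[
\nu\bigl(\Delta\vec{u}+\tfrac{1}{3}\nabla(\nabla\cdot\vec{u})\bigr),
\]
and then substitute the identity for $\nabla\cdot\vec{u}$ established in Theorem~\ref{T1}. This reduces the argument to a single replacement followed by bookkeeping of the minus sign.

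First, I would recall the Newtonian constitutive relation for the viscous stress tensor of a compressible fluid,
\[
\tau_{ij}=\mu\Bigl(\tfrac{\partial u_i}{\partial x_j}+\tfrac{\partial u_j}{\partial x_i}\Bigr)+\lambda\,\delta_{ij}\,\nabla\cdot\vec{u},
\]
and apply Stokes's hypothesis $\lambda=-\tfrac{2}{3}\mu$, so that taking the divergence component-wise gives
\[
(\nabla\cdot\tau)_i=\mu\,\Delta u_i+\tfrac{\mu}{3}\tfrac{\partial}{\partial x_i}(\nabla\cdot\vec{u}).
\]
Inserting this into the momentum balance $\rho(\partial_t\vec{u}+(\vec{u}\cdot\nabla)\vec{u})=-\nabla\bar{p}+\nabla\cdot\tau+\rho\vec{f}$, dividing by $\rho$ and writing $\nu=\mu/\rho$, I obtain the compressible Navier--Stokes equation in the intermediate form
\[
\frac{\partial\vec{u}}{\partial t}+(\vec{u}\cdot\nabla)\vec{u}=-\frac{1}{\rho}\nabla\bar{p}+\nu\Bigl(\Delta\vec{u}+\tfrac{1}{3}\nabla(\nabla\cdot\vec{u})\Bigr)+\vec{f}.
\]

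Second, I would invoke Theorem~\ref{T1}, which gives the pointwise identity
\[
\nabla\cdot\vec{u}=-\tfrac{1}{\rho}\bigl(\tfrac{\partial\rho}{\partial t}+\vec{u}\cdot\nabla\rho\bigr),
\]
valid for every $\vec{x}\in R^{3}$ and $t\ge 0$ under the smoothness assumptions of the theorem. Since this identity holds as a scalar equality of smooth functions, its gradient can be taken on both sides, yielding
\[
\nabla(\nabla\cdot\vec{u})=-\nabla\Bigl(\tfrac{1}{\rho}\bigl(\tfrac{\partial\rho}{\partial t}+\vec{u}\cdot\nabla\rho\bigr)\Bigr).
\]
Substituting this into the viscous term of the intermediate form above absorbs the minus sign and produces exactly the expression claimed in the theorem.

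The argument is essentially a direct substitution, so there is no deep analytical obstacle; the only point requiring care is justifying the coefficient $\tfrac{1}{3}$ in the viscous term, which relies on adopting Stokes's hypothesis for the second viscosity coefficient. I would therefore state this hypothesis explicitly at the start of the proof so the factor is not left unmotivated, and then the remainder of the derivation is just the algebraic manipulation described above.
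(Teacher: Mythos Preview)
Your proposal is correct and follows essentially the same route as the paper: start from the compressible Navier--Stokes equation in the form $\partial_t\vec{u}+(\vec{u}\cdot\nabla)\vec{u}=-\tfrac{1}{\rho}\nabla\bar{p}+\nu(\Delta\vec{u}+\tfrac{1}{3}\nabla(\nabla\cdot\vec{u}))+\vec{f}$ and substitute the identity from Theorem~\ref{T1}. The only difference is that the paper simply takes this intermediate form as given, whereas you additionally derive it from the Newtonian stress tensor and Stokes's hypothesis to motivate the coefficient $\tfrac{1}{3}$; this extra justification is sound but not present in the paper's own argument.
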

\begin{proof}
\noindent Starting from statement for compressible Navier-Stokes equation:
\begin{align}
    \frac{\partial \vec{u}}{\partial t} +(\vec{u}\cdot \nabla )\vec{u}=-\frac{1}{\rho } \nabla \bar{p}+\nu (\Delta \vec{u}+\frac{1}{3} \nabla (\nabla \cdot \vec{u}))+\vec{f}
\label{E4.1}\end{align}
for all positions $\vec{x}\in R^{3}$ at any time $t\ge 0$,
as per Theorem \ref{T1}:
\begin{align}
    \nabla \cdot \vec{u}=-\frac{1}{\rho } (\frac{\partial \rho }{\partial t} +\vec{u}\cdot \nabla \rho )
\label{E4.2}\end{align}
for all positions $\vec{x}\in R^{3}$ at any time $t\ge 0$,
applying (\ref{E4.2}) in (\ref{E4.1}):
\begin{align}
    \frac{\partial \vec{u}}{\partial t} +(\vec{u}\cdot \nabla )\vec{u}=-\frac{1}{\rho } \nabla \bar{p}+\nu (\Delta \vec{u}-\frac{1}{3} \nabla (\frac{1}{\rho } (\frac{\partial \rho }{\partial t} +\vec{u}\cdot \nabla \rho )))+\vec{f}
\label{E4.3}\end{align}
for all positions $\vec{x}\in R^{3}$  at any time $t\ge 0$, which proves this Theorem.
\end{proof}

\begin{thrm}\label{T5}
Let the compressible fluid of density $\rho (\vec{x},t)\in R$ occupy all of $R^{3} $  space, where $\rho (\vec{x},t)$ is smooth and continuously differentiable scalar function, and let $\vec{u}(\vec{x},t)\in R^{3}$ represent the fluid velocity vector field, which is smooth and continuously differentiable for all positions in space $\vec{x}\in R^{3} $  at any time $t\ge 0$.

\noindent Then condition for vanishing (zero) viscosity related term of compressible Navier-Stokes equation, can be expressed as:
\[\nabla (\frac{1}{\rho } (\frac{\partial \rho }{\partial t} +\vec{u}\cdot \nabla \rho ))=3\Delta \vec{u}\]
for all positions $\vec{x}\in R^{3} $ at any time$t\ge 0$.
\end{thrm}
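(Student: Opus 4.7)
The plan is to leverage Theorem~\ref{T4} directly, since it already rewrites the compressible Navier--Stokes momentum equation with the divergence $\nabla\cdot\vec{u}$ replaced by its density-based expression from Theorem~\ref{T1}. Concretely, Theorem~\ref{T4} gives
\[\frac{\partial \vec{u}}{\partial t} + (\vec{u}\cdot\nabla)\vec{u} = -\frac{1}{\rho}\nabla\bar{p} + \nu\left(\Delta\vec{u} - \frac{1}{3}\nabla\left(\frac{1}{\rho}\left(\frac{\partial\rho}{\partial t} + \vec{u}\cdot\nabla\rho\right)\right)\right) + \vec{f},\]
and the viscosity-related contribution is exactly the single term multiplied by $\nu$.

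Next I would impose that this viscosity term vanishes pointwise. Under the standing convention that $\nu$ is a fixed, nonzero kinematic viscosity characterizing the fluid, vanishing of the whole term is equivalent to vanishing of the vector expression inside the outer parentheses, i.e.
\[\Delta\vec{u} - \frac{1}{3}\nabla\left(\frac{1}{\rho}\left(\frac{\partial\rho}{\partial t} + \vec{u}\cdot\nabla\rho\right)\right) = \vec{0}.\]
A single algebraic rearrangement --- transporting the gradient term to the right-hand side and multiplying through by $3$ --- then yields the claimed identity
\[\nabla\left(\frac{1}{\rho}\left(\frac{\partial\rho}{\partial t} + \vec{u}\cdot\nabla\rho\right)\right) = 3\Delta\vec{u}.\]
The smoothness and continuous differentiability hypotheses on $\rho$ and $\vec{u}$ guarantee that both sides are well-defined vector fields on $R^{3}$ for every $t\ge 0$, so the identity holds in the strong (pointwise) sense throughout the domain.

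The derivation is purely algebraic once Theorem~\ref{T4} is available, so there is no genuinely hard analytic step. The one point that needs explicit comment --- the \emph{main obstacle} in presentation rather than in proof --- is the implicit assumption $\nu\neq 0$ used when passing from ``the viscosity term is zero'' to ``the bracketed expression is zero.'' If $\nu$ were zero the viscosity term would vanish trivially for any $\vec{u},\rho$ and the identity would be vacuous; the content of the theorem therefore lies in characterizing when the viscous contribution to the momentum balance cancels \emph{despite} $\nu$ being nonzero. I would also note that this is a condition on the pair $(\vec{u},\rho)$ at each spacetime point, not a statement about global existence or about turbulence per se; the subsequent reduction to the harmonicity of $d(\vec{x},t)$ advertised in the introduction would be pursued separately by taking the divergence of both sides of the identity just derived.
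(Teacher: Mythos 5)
Your proposal is correct and follows essentially the same route as the paper's own proof: start from the form of the compressible Navier--Stokes equation in Theorem~\ref{T4}, set the viscosity-related term to zero, cancel $\nu$, and rearrange. Your explicit remark that the cancellation of $\nu$ presupposes $\nu\neq 0$ is a point the paper leaves implicit, but the argument is otherwise identical.
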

\begin{proof}
\noindent As per Theorem \ref{T4} Navier-Stokes equation for compressible fluid can be expressed as:
\begin{align}
    \frac{\partial \vec{u}}{\partial t} +(\vec{u}\cdot \nabla )\vec{u}=-\frac{1}{\rho } \nabla \bar{p}+\nu (\Delta \vec{u}-\frac{1}{3} \nabla (\frac{1}{\rho } (\frac{\partial \rho }{\partial t} +\vec{u}\cdot \nabla \rho )))+\vec{f}
\label{E5.1}\end{align}
for all positions $\vec{x}\in R^{3} $ at any time $t\ge 0$.

\noindent Let's explore conditions when viscosity related term in statement (\ref{E5.1}) is zero:
\begin{align}
    \nu (\Delta \vec{u}-\frac{1}{3} \nabla (\frac{1}{\rho } (\frac{\partial \rho }{\partial t} +\vec{u}\cdot \nabla \rho )))=0
\label{E5.2}\end{align}
\begin{align}
    \Delta \vec{u}-\frac{1}{3} \nabla (\frac{1}{\rho } (\frac{\partial \rho }{\partial t} +\vec{u}\cdot \nabla \rho )))=0
\label{E5.3}\end{align}
\begin{align}
    \Delta \vec{u}=\frac{1}{3} \nabla (\frac{1}{\rho } (\frac{\partial \rho }{\partial t} +\vec{u}\cdot \nabla \rho )))
\label{E5.4}\end{align}
\begin{align}
    \nabla (\frac{1}{\rho } (\frac{\partial \rho }{\partial t} +\vec{u}\cdot \nabla \rho )))=3\Delta \vec{u}
\label{E5.5}\end{align}
for all positions $\vec{x}\in R^{3} $ at any time $t\ge 0$, which proves this theorem.
\end{proof}
Based on Theorem \ref{T5}, we conclude that in case when fluid density and fluid velocity are such that statement
\[\nabla (\frac{1}{\rho } (\frac{\partial \rho }{\partial t} +\vec{u}\cdot \nabla \rho )))=3\Delta \vec{u}\]
is satisfied, viscosity related term of Navier-Stokes equation for compressible fluids
\[\nu (\Delta \vec{u}-\frac{1}{3} \nabla (\frac{1}{\rho } (\frac{\partial \rho }{\partial t} +\vec{u}\cdot \nabla \rho )))\]
becomes equal to zero.
\[\nu (\Delta \vec{u}-\frac{1}{3} \nabla (\frac{1}{\rho } (\frac{\partial \rho }{\partial t} +\vec{u}\cdot \nabla \rho )))=0\]
\noindent Based on that, it could be concluded that once such condition is satisfied, as per Navier-Stokes equation for compressible fluids and Theorem \ref{T5}, viscosity should not have any effect on dynamics of fluid.

\begin{thrm}\label{T6}
Let the compressible fluid of density $\rho (\vec{x},t)\in R$ occupy all of $R^{3} $  space, where $\rho (\vec{x},t)$ is smooth and continuously differentiable scalar function, and let $\vec{u}(\vec{x},t)\in R^{3}$ represent the fluid velocity vector field, which is smooth and continuously differentiable for all positions in space $\vec{x}\in R^{3} $  at any time $t\ge 0$.

\noindent Then condition for vanishing (zero) viscosity related term of compressible Navier-Stokes equation for all $\vec{x}\in R^{3} $ and $t\ge 0$ is that scalar function
\[d(\vec{x},t)=-\frac{1}{\rho } (\frac{\partial \rho }{\partial t} +\vec{u}\cdot \nabla \rho )\]
\textbf{mush be harmonic function} satisfying Laplace's equation
\[\Delta d(\vec{x},t)=0\]
 for all positions $\vec{x}\in R^{3} $ at any time $t\ge 0$.
\end{thrm}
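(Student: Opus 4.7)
The plan is to combine Theorem \ref{T5} with Theorem \ref{T1}: rewrite the Theorem \ref{T5} vanishing-viscosity identity in terms of the scalar $d$, apply the divergence operator to both sides, and use the identification $\nabla\cdot\vec{u}=d$ from Theorem \ref{T1} to collapse everything to a scalar Laplace equation. Since $\frac{1}{\rho}(\frac{\partial\rho}{\partial t}+\vec{u}\cdot\nabla\rho)=-d$, the identity $\nabla(\frac{1}{\rho}(\frac{\partial\rho}{\partial t}+\vec{u}\cdot\nabla\rho))=3\Delta\vec{u}$ provided by Theorem \ref{T5} immediately becomes $-\nabla d = 3\Delta\vec{u}$, which is the compact form I would work with throughout.

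Next, I would apply $\nabla\cdot$ to both sides. The left-hand side yields $-\nabla\cdot(\nabla d)=-\Delta d$. For the right-hand side I would invoke the commutativity of divergence with the vector Laplacian on smooth fields, $\nabla\cdot(\Delta\vec{u})=\Delta(\nabla\cdot\vec{u})$, which follows from Clairaut's theorem on equality of mixed partials under the smoothness hypothesis stated in the theorem. Substituting $\nabla\cdot\vec{u}=d$ from Theorem \ref{T1} then gives $\nabla\cdot(\Delta\vec{u})=\Delta d$. Equating the two sides produces $-\Delta d = 3\Delta d$, hence $4\Delta d = 0$, i.e., $\Delta d(\vec{x},t)=0$ for all $\vec{x}\in R^{3}$ and $t\ge 0$, which is exactly the claim that $d$ is harmonic.

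The main conceptual obstacle concerns the direction of the implication rather than any hard calculation. The divergence argument above is a clean one-way deduction: pointwise vanishing of the viscosity term forces $d$ to be harmonic. The converse is more delicate — knowing only $\Delta d=0$ guarantees that $\Delta\vec{u}$ is divergence-free (since $\nabla\cdot\Delta\vec{u}=\Delta d=0$), but it does not force $\Delta\vec{u}$ to equal the specific vector $-\tfrac{1}{3}\nabla d$ demanded for the viscosity term to vanish. I would therefore write the proof as establishing the necessary condition, mirroring the framing already used in Theorem \ref{T5}, and flag explicitly that reconstructing the full vector identity $-\nabla d=3\Delta\vec{u}$ from harmonicity of $d$ alone would require additional structural information on $\vec{u}$ that the scalar Laplace equation does not by itself supply.
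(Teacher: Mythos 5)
Your proposal follows essentially the same route as the paper's own proof: apply the divergence operator to the vanishing-viscosity identity, commute $\nabla\cdot$ with the vector Laplacian, use $\nabla\cdot\nabla=\Delta$ together with the identification $\nabla\cdot\vec{u}=d$ from Theorem \ref{T1}, and collapse everything to $\frac{4}{3}\Delta d=0$. Your explicit caveat that this argument establishes harmonicity of $d$ only as a \emph{necessary} condition --- since $\Delta d=0$ alone does not force $\Delta\vec{u}=-\tfrac{1}{3}\nabla d$ --- is correct and is a point the paper's proof (and its theorem statement, which reads as if the condition were equivalent) leaves unaddressed.
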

\begin{proof}
\noindent Starting from statement for compressible Navier-Stokes equation:
\begin{align}
    \frac{\partial \vec{u}}{\partial t} +(\vec{u}\cdot \nabla )\vec{u}=-\frac{1}{\rho } \nabla \bar{p}+\nu (\Delta \vec{u}+\frac{1}{3} \nabla (\nabla \cdot \vec{u}))+\vec{f}
\label{E6.1}\end{align}
viscosity related term of Navier-Stokes equation (\ref{E6.1}) must be zero for vanishing viscosity:
\begin{align}
    \nu (\Delta \vec{u}+\frac{1}{3} \nabla (\nabla \cdot \vec{u}))=0
\label{E6.2}\end{align}
\begin{align}
    \Delta \vec{u}+\frac{1}{3} \nabla (\nabla \cdot \vec{u})=0
\label{E6.3}\end{align}
applying  divergence operator $\nabla \cdot$ on (\ref{E6.3}):
\begin{align}
    \nabla \cdot \Delta \vec{u}+\frac{1}{3} \nabla \cdot \nabla (\nabla \cdot \vec{u})=0
\label{E6.4}\end{align}
as
\begin{align}
    \nabla \cdot \Delta \vec{u}=\Delta(\nabla \cdot \vec{u})
\label{E6.5}\end{align}
and as
\begin{align}
    \nabla \cdot \nabla (\nabla \cdot \vec{u})=\Delta (\nabla \cdot \vec{u})
\label{E6.6}\end{align}
applying (\ref{E6.5}) and (\ref{E6.6}) in (\ref{E6.4}):
\begin{align}
    \Delta (\nabla \cdot \vec{u})+\frac{1}{3} \Delta (\nabla \cdot \vec{u})=0
\label{E6.7}\end{align}
\begin{align}
    \frac{4}{3} \Delta(\nabla \cdot \vec{u})=0
\label{E6.8}\end{align}
\begin{align}
    \Delta(\nabla \cdot \vec{u})=0
\label{E6.9}\end{align}
let's define scalar function
\begin{align}
    d(\vec{x},t)=\nabla \cdot \vec{u}
\label{E6.10}\end{align}
for all positions $\vec{x}\in R^{3} $ at any time $t\ge 0$.
\noindent Applying (\ref{E6.10}) in (\ref{E6.9}):
\begin{align}
    \Delta d(\vec{x},t)=0
\label{E6.11}\end{align}
for all positions $\vec{x}\in R^{3} $ at any time $t\ge 0$.
\noindent Statement (\ref{E6.11}) represents Laplace's equation, which has solutions when scalar function
\begin{align}
    d(\vec{x},t)=\nabla \cdot \vec{u}
\label{E6.12}\end{align}
\textbf{is harmonic function} for all positions $\vec{x}\in R^{3}$ at any time $t\ge 0$.
\noindent As per Theorem 1:
\begin{align}
    \nabla \cdot \vec{u}=-\frac{1}{\rho } (\frac{\partial \rho }{\partial t} +\vec{u}\cdot \nabla \rho )
\label{E6.13}\end{align}
Applying (\ref{E6.13}) in (\ref{E6.12}) we conclude that
\begin{align}
    d(\vec{x},t)=-\frac{1}{\rho } (\frac{\partial \rho }{\partial t} +\vec{u}\cdot \nabla \rho )
\label{E6.14}\end{align}
\textbf{must be harmonic function} for all positions $\vec{x}\in R^{3} $ at any time $t\ge 0$.
\noindent Based on (\ref{E6.14}) and (\ref{E6.11}) we conclude that this theorem is proven.
\end{proof}

\section{Discussion}
\noindent Based on performed analysis, we conclude that fluid velocity vector field divergence $ \nabla \cdot \vec{u}$ , expressed in function of fluid density and fluid velocity:
\[\nabla \cdot \vec{u}=-\frac{1}{\rho } (\frac{\partial \rho }{\partial t} +\vec{u}\cdot \nabla \rho )\]
\noindent describes essential relationship between fluid velocity vector field divergence and fluid density, which variability over space as well as over time, directly influences and in essence, defines velocity divergence.
\noindent Statement of Theorem \ref{T1}, outlines the essence of the fluid velocity vector field divergence $\nabla \cdot \vec{u}$ as a result of the direct derivation from one of the most fundamental statements in fluid dynamics, continuity equation, representing the law of conservation of mass. The continuity equation ensures that mass of fluid is conserved in all circumstances. From such fundamental statement, Theorem \ref{T1} demonstrates derivation of the divergence of fluid velocity vector field $\nabla \cdot \vec{u}$  which is in function of the fluid density, fluid density rate of change over time $\frac{\partial \rho }{\partial t}$ and space $\nabla \rho$, in addition to fluid velocity vector field itself $\vec{u}$.
\noindent The statement
\[\nabla \cdot \vec{u}=-\frac{1}{\rho } (\frac{\partial \rho }{\partial t} +\vec{u}\cdot \nabla \rho )=-\frac{1}{\rho } \frac{d\rho }{dt}\]
in addition to representing relationship between velocity vector field divergence and fluid density, provides the insight into the actual meaning behind velocity vector field divergence. As per that statement, velocity vector field divergence represents ratio between total derivative of fluid density over time and density itself.
Total derivative of fluid density, actually represents average density of fluid change over some elapsed time $dt$. Once $\frac{d\rho}{dt}$ is divided with density $\rho$ with negative sign, resulting statement represents velocity vector field divergence. Such ratio is negative when total derivative of fluid density over time is positive $\frac{d\rho}{dt}>0$. What that means is that for observed infinitesimal volume of space occupied by fluid, average fluid density is increased over elapsed increment of time, while at the same time, as consequence, resulting velocity vector field divergence is negative. In order for average fluid density to be increased within the observed infinitesimal volume of space over an increment of time $dt$, some quantity of fluid has to be added to the observed infinitesimal volume of space, which can happen over the surface of observed volume of space, through which, additional fluid flows into the infinitesimal volume, in order to contribute to increasing fluid density within the volume. Therefore, conclusion that divergence is negative in case when fluid density is increased makes intuitive sense.
On the other hand, in case that fluid total density within infinitesimal volume of space is decreased over some increment in time $dt$ means that some quantity of fluid, which used to be within that volume, left the observed infinitesimal volume of space, as its density is smaller compared to the time prior to the elapsed increment of time $dt$. What that means is that some quantity of fluid traveled through surfaces of volume of space observed, from within observed infinitesimal volume towards its exterior, making resulting divergence positive, which also makes intuitive sense.
\newline
We conclude that fluid velocity vector field divergence $\nabla \cdot \vec{u}$ can be intuitively interpreted as normalized rate of change of averaged fluid density within an infinitesimal volume of space, over an increment of elapsed time, with negative sign. When positive, indicates loss of average fluid density within the observed infinitesimal volume of space over an increment of time $dt$. On the other hand, when negative, it indicates increase of fluid density within the observed infinitesimal volume of space and over an increment of time $dt$.
In case that rate of density change over an increment of time $dt$ is zero, then divergence is also zero, meaning that average density of fluid within volume of space is constant over elapsed time.

In addition to that, we demonstrated examples when fluid velocity vector field $\vec{u}$ can be divergence-free $\nabla \cdot \vec{u}=0$ in scenarios when fluid density $\rho$ is not constant over space and/or over time. By applying the velocity vector field divergence in such form to the Navier-Stokes equation for compressible fluids, we obtained compressible Navier-Stokes equation in following form:
\[\frac{\partial \vec{u}}{\partial t} +(\vec{u}\cdot \nabla )\vec{u}=-\frac{1}{\rho } \nabla \bar{p}+\nu (\Delta \vec{u}-\frac{1}{3} \nabla (\frac{1}{\rho } (\frac{\partial \rho }{\partial t} +\vec{u}\cdot \nabla \rho )))+\vec{f}\]
which can be utilized both for scenarios when fluid velocity vector field is not divergence-free, as well as when it is.
Based on such modified compressible Navier-Stokes equation, condition for vanishing viscosity term, in function of fluid density and velocity can be expressed as:
\[\nabla (\frac{1}{\rho } (\frac{\partial \rho }{\partial t} +\vec{u}\cdot \nabla \rho ))=3\Delta \vec{u}\]
In addition to that, we deduct even more elementary condition for vanishing viscosity term of compressible Navier-Stokes equation, stating that scalar function
\[d(\vec{x},t)=-\frac{1}{\rho } (\frac{\partial \rho }{\partial t} +\vec{u}\cdot \nabla \rho )\]
\textbf{mush be harmonic, satisfying Laplace's equation}
\[\Delta d(\vec{x},t)=0\]
for all positions $\vec{x}\in R^{3} $ at any time $t\ge 0$.
\newline
Such conclusion, might represent an opportunity for better understanding conditions leading to triggering turbulent behaviour in fluids. Deducted condition for vanishing viscosity, in its essence, is in function of fluid density rate of change across space as well as time. When rates of change of fluid density over space, in form of gradient of density $\nabla \rho$, and fluid density rate of change over time, in form of $\frac{\partial \rho }{\partial t}$ are combined, resulting scalar function, once becoming harmonic over space and time, represents a condition for vanishing viscosity related term in compressible Navier-Stokes equation. Such realization might provide new opportunities for better understanding role of fluid density in triggering turbulence, as well as viscosity and core underlying physical mechanisms behind it.
\newpage

\end{document}